\definecolor{blue}{rgb}{0.1,0.2,0.5}
\definecolor{brown}{rgb}{0.6,0.6,0.2}
\theoremstyle{plain}
\newtheorem{theorem}{Theorem}
\newcommand{\newtheoremwithcrefformat}[2]{%
  \newtheorem{#1}[theorem]{#2}%
  \crefformat{#1}{##2\MakeUppercase#1~##1##3}%
  \Crefformat{#1}{##2\MakeUppercase#1~##1##3}%
}
\newcommand{\newseptheoremwithcrefformat}[2]{%
  \newtheorem{#1}{#2}%
  \crefformat{#1}{##2\MakeUppercase#1~##1##3}%
  \Crefformat{#1}{##2\MakeUppercase#1~##1##3}%
}
\theoremstyle{nonumberplain}
\newtheorem{proof}{Proof}
\newtheorem{claimproof}{Proof of Claim}
\def\cqedsymbol{\ifmmode$\lrcorner$\else{\unskip\nobreak\hfil
\penalty50\hskip1em\null\nobreak\hfil$\lrcorner$
\parfillskip=0pt\finalhyphendemerits=0\endgraf}\fi}
\newcommand{\ZZ}{{\ensuremath{\mathbb{Z}}}}
\newcommand{\NN}{{\ensuremath{\mathbb{N}}}}
\newcommand{\Oh}{\ensuremath{\mathcal{O}}}
\newcommand{\NP}{$\mathsf{NP}$\xspace}
\newcommand{\XP}{$\mathsf{XP}$\xspace}
\newcommand{\FPT}{$\mathsf{FPT}$\xspace}
\newcommand{\ve}[1]{\ensuremath{\mathbf{#1}}}
\newcommand{\veb}{\ensuremath{\mathbf{b}}}
\newcommand{\vex}{\ensuremath{\mathbf{x}}}
\DeclareMathOperator{\td}{\operatorname{td}}
\DeclareMathOperator{\maxdeg}{\operatorname{maxdeg}}
\DeclareMathOperator{\assign}{\operatorname{assignment}}
\newcommand{\appendixText}{}
\newcommand{\toappendix}[1]{\gappto{\appendixText}{{#1}}}
\renewcommand{\leq}{\leqslant}
\renewcommand{\le}{\leqslant}
\begin{document}

\title{Integer Programming and Incidence Treedepth\thanks{The manuscript is an extended version of article~\cite{DBLP:conf/ipco/EibenGKOPW19}, which appeared in the proceedings of IPCO'19.
This work is a part of projects CUTACOMBS, PowAlgDO (M.~Wrochna) and TOTAL (M.~Pilipczuk) that
have received funding from the European Research Council (ERC) under the European Union’s
Horizon 2020 research and innovation programme (grant agreements No. 714704, No 714532, and No. 677651).
%Du\v{s}an Knop is supported by DFG, project ``MaMu'', NI 369/19, and is also affiliated with Faculty~IV, TU Berlin, Germany.
Robert Ganian is supported by the Austrian Science Fund (FWF Project P31336).
Marcin Wrochna is supported by Foundation for Polish Science (FNP) via the START stipend, and this work was partially done while he was affiliated with the Institute of Informatics, University of Warsaw, Poland.
}}
%\titlerunning{Integer Programming and Incidence Treedepth}

\newcommand{\email}[1]{\texttt{#1}}

\author{
Eduard Eiben\thanks{Department of Computer Science, Royal Holloway, University of London, UK, \email{eduard.eiben@rhul.ac.uk}}
%\orcidID{0000-1111-2222-3333}
\and
Robert Ganian\thanks{Algorithms and Complexity Group, Vienna University of Technology, Austria,
\email{rganian@ac.tuwien.ac.at}
}
%\orcidID{1111-2222-3333-4444}
\and
Dušan Knop\thanks{Department of Theoretical Computer Science, Czech Technical University in Prague, Czech Republic, \email{dusan.knop@fit.cvut.cz}
}
%\orcidID{2222--3333-4444-5555}
\and
Sebastian Ordyniak\thanks{School of Computing, University of Leeds, UK,
\email{s.ordyniak@gmail.com}}
\and
Micha\l~Pilipczuk\thanks{Institute of Informatics, University of Warsaw, Poland,
\email{michal.pilipczuk@mimuw.edu.pl}}
\and
Marcin Wrochna\thanks{University of Oxford, United Kingdom, \email{marcin.wrochna@mimuw.edu.pl}}
}

%\authorrunning{E. Eiben et al.}
%
% \institute{E. Eiben \at
% Department of Informatics, University of Bergen, Norway \\
% \email{eduard.eiben@uib.no}
% \and
% R. Ganian \at
% Algorithms and Complexity Group, Vienna University of Technology, Austria\\
% \email{rganian@ac.tuwien.ac.at}
% \and
% D. Knop \at
% %Algorithmics and Computational Complexity, Faculty~IV, TU Berlin, Germany
% %\and
% Department of Theoretical Computer Science, Czech Technical University in Prague, Czech Republic \\
% \email{dusan.knop@fit.cvut.cz}
% \and
% S. Ordyniak \at
% Algorithms Group, University of Sheffield, UK\\
% \email{s.ordyniak@sheffield.ac.uk}
% \and
% M. Pilipczuk\at
% Institute of Informatics, University of Warsaw, Poland \\
% \email{michal.pilipczuk@mimuw.edu.pl}
% \and
% M. Wrochna \at
% University of Oxford, United Kingdom \\
% \email{marcin.wrochna@mimuw.edu.pl}
% }
%
\maketitle              % typeset the header of the contribution

\begin{textblock}{20}(13.9, 10.7)
\includegraphics[width=40px]{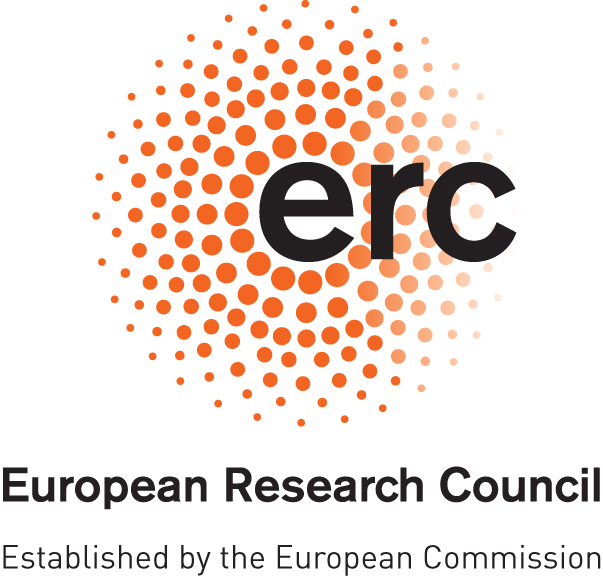}%
\end{textblock}
\begin{textblock}{20}(13.65, 11)
\includegraphics[width=60px]{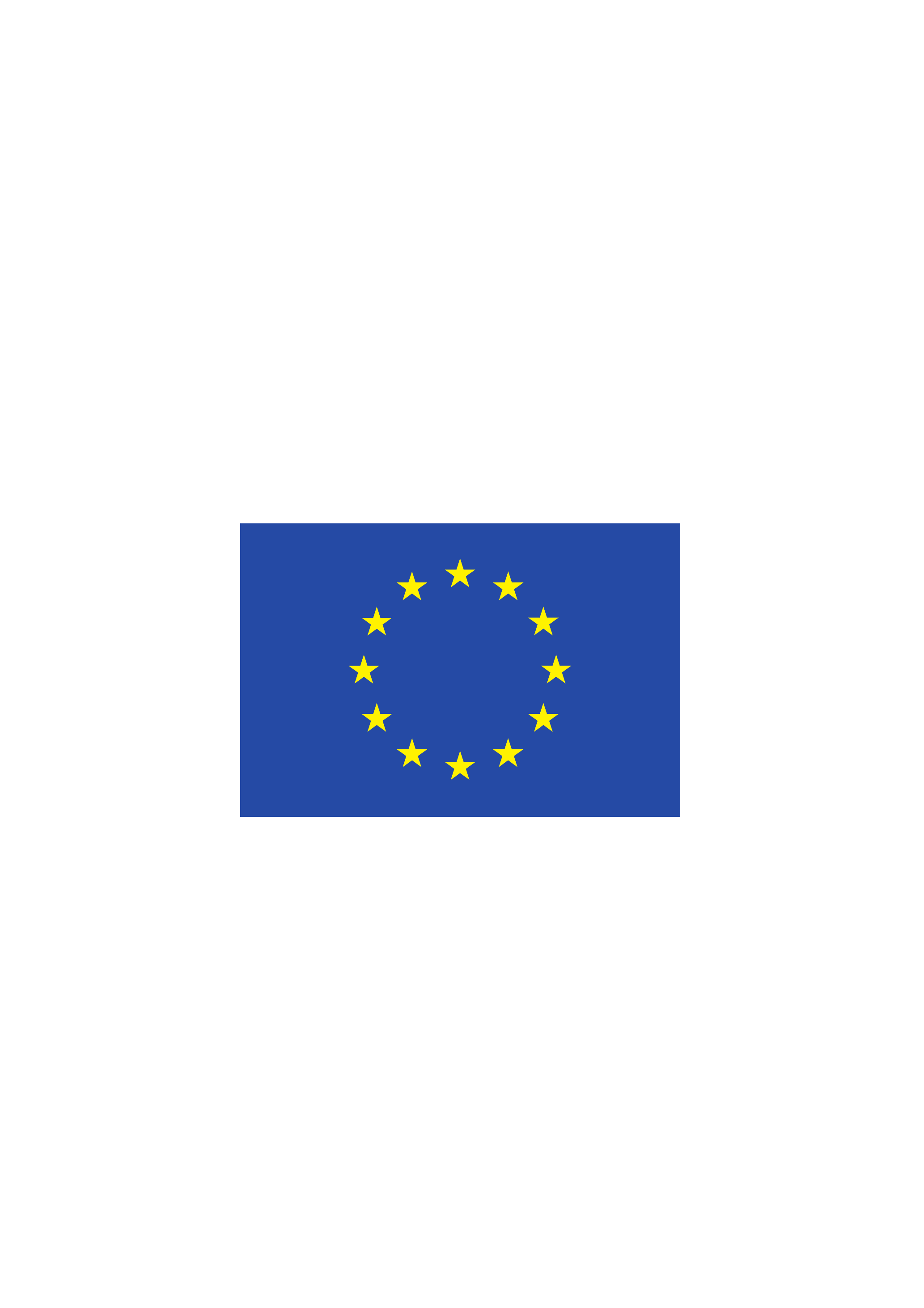}%
\end{textblock}

\begin{abstract}
Recently a strong connection has been shown between the trac\-ta\-bil\-i\-ty of integer programming (IP) with bounded coefficients on the one side and the structure  
of its constraint matrix on the other side.
To that end, integer linear programming is fixed-parameter tractable with respect to the primal (or dual) treedepth of the Gaifman graph of its constraint matrix and the largest coefficient (in absolute value).
%Here, primal and dual treedepth refer to the treedepth of the primal and dual Gaifman graph (of the constraint matrix).
Motivated by this, Koutecký, Levin, and Onn [ICALP 2018] asked whether it is possible to extend these result to a more broader class of integer linear programs.
More formally, is integer linear programming fixed-parameter tractable with respect to the incidence treedepth of its constraint matrix and the largest coefficient (in absolute value)?

We answer this question in negative.
In particular, we prove that deciding the feasibility of a system in the standard form, ${A\vex = \veb}, {\ve{l} \le \vex \le \ve{u}}$, is \NP-hard even when the absolute value of any coefficient in $A$ is 1 and the incidence treedepth of $A$ is 5.
Consequently, it is not possible to decide feasibility in polynomial time even if both the assumed parameters are constant, unless $\mathsf{P}=\mathsf{NP}$.
Moreover, we complement this intractability result by showing tractability for natural and only slightly more
restrictive settings, namely: (1) treedepth with an additional bound on either the maximum arity of constraints or the maximum number of occurrences of variables and (2) the vertex cover number.

\end{abstract}
%\keywords{Integer Programming \and Incidence Treedepth \and Gaifman Graph \and Computational Complexity.}

%
%
%
%
%
\newcommand{\bigoh}{\mathcal{O}}

%\toappendix{\section*{Appendix}}
\section{Introduction}
In this paper we consider the decision version of Integer Linear Program (ILP) in \emph{standard form}.
Here, given a matrix $A \in \ZZ^{m \times n}$ with $m$ rows (constraints) and $n$ columns and vectors $\veb \in \ZZ^m$ and $\ve{l}, \ve{u} \in \ZZ^n$ the task is to decide whether the set
\begin{equation}
\left\{ \vex \in \ZZ^n \mid A\vex = \veb, \, \ve{l} \leq \vex \leq \ve{u} \right\}  \tag{SSol}\label{StandardLinIP}
\end{equation}
is non-empty.
We are going to study structural properties of the incidence graph of the matrix $A$.
An integer program (IP) is a \emph{standard IP} (SIP) if its set of solutions is described by \eqref{StandardLinIP}, that is, if it is of the form
\begin{equation}
\min \left\{ f(\vex) \mid A\vex = \veb, \, \ve{l} \leq \vex \leq \ve{u} \,, \vex \in \ZZ^n \right\} \,, \tag{SIP}\label{SIP}
\end{equation}
where $f\colon \NN^n \to \NN$ is the \emph{objective function}; in case $f$ is a linear function the above SIP is said to be a linear SIP.
Before we go into more details we first review some recent development concerning algorithms for solving (linear) SIPs in variable dimension with the matrix $A$ admitting a certain decomposition.

Let $E$ be a $2 \times 2$ block matrix, that is, $E = \left(\begin{smallmatrix} A_1 & A_2 \\ A_3 & A_4 \end{smallmatrix}\right)$, where $A_1, \ldots, A_4$ are integral matrices.
% and let $\maxdeg = \| E \|_\infty$ (i.e., $\maxdeg$ is in the absolute value of the largest coefficient in $E$).
We define an \emph{$n$-fold 4-block product of $E$} for a positive integer $n$ as the following block matrix
\[
E^{(n)} =
\begin{pmatrix}
A_1 & A_2 & A_2 & \cdots & A_2 \\
A_3 & A_4 & 0 & \cdots & 0 \\
A_3 & 0 & A_4 & \cdots & 0 \\
\vdots &  &  & \ddots &  \\
A_3 & 0 & 0 & \cdots & A_4
\end{pmatrix},
\]
where $0$ is a matrix containing only zeros (of appropriate size).
One can ask whether replacing $A$ in the definition of the set of feasible solutions \eqref{StandardLinIP} can give us an algorithmic advantage leading to an efficient algorithm for solving such SIPs.
We call such an SIP an \emph{$n$-fold 4-block IP}.
We derive two special cases of the $n$-fold 4-block IP with respect to special cases for the matrix $E$ (see monographs~\cite{DeLoeraHK13,Onn10} for more information).
If both $A_1$ and $A_3$ are void (not present at all), then the result of replacing $A$ with $E^{(n)}$ in \eqref{SIP} yields the \emph{$n$-fold IP}.
Similarly, if $A_1$ and $A_2$ are void, we obtain the \emph{2-stage stochastic IP}.

The first, up to our knowledge, pioneering algorithmic work on $n$-fold 4-block IPs is due to Hemmecke et al.~\cite{HemmeckeKW10}.
They gave an algorithm that given $n$, the $2 \times 2$ block matrix $E$, and vectors $\ve{w}, \ve{b}, \ve{l}, \ve{u}$ finds an integral vector $\vex$ with $E^{(n)}\vex = \ve{b}, \ve{l} \le \vex \le \ve{u}$ minimizing $\ve{w}\vex$.
The algorithm of Hemmecke et al.~\cite{HemmeckeKW10} runs in time $n^{g(r,s,\| E \|_\infty)}L$, where $r$ is the number of rows of $E$, $s$ is the number of columns of $E$, $L$ is the size of the input, and $g\colon \NN \to \NN$ is a computable function.
Thus, from the parameterized complexity viewpoint this is an \XP algorithm for parameters $r,s, \| E \|_\infty$.
This algorithm has been recently improved by Chen et al.~\cite{ChenXS18} who give better bounds on the function $g$; it is worth noting that Chen et al.~\cite{ChenXS18} study also the special case where $A_1$ is a zero matrix and even in that case present an \XP algorithm.
Since the work of Hemmecke et al.~\cite{HemmeckeKW10}
the question of whether it is possible to improve the algorithm to run in time $g'(r,s,\|E\|_\infty) \cdot n^{\bigoh(1)}L$ or not has become a major open question in the area of mathematical programming.

Of course, the complexity of the two aforementioned special cases of $n$-fold 4-block IP are extensively studied as well.
The first \FPT algorithm\footnote{That is, an algorithm running in time $f(r,s,\|E\|_\infty) \cdot n^{\bigoh(1)}L$.} for the $n$-fold IPs (for parameters $r,s,\| E \|_\infty$) is due to Hemmecke et al.~\cite{HemmeckeOR13}.
Their algorithm has been subsequently improved~\cite{KouteckyLO18,EisenbrandHK18}.
Altmanová et al.~\cite{AltmanovaKK18} implemented the algorithm of Hemmecke et al.~\cite{HemmeckeOR13} and improved the polynomial factor (achieving the same running time as Eisenbrand et al.~\cite{EisenbrandHK18}) the above algorithms (from cubic dependence to $n^2 \log n$).
The best running time of an algorithm solving $n$-fold IP is due to Jansen et al.~\cite{JansenLR18} and runs in nearly linear time in terms of $n$.

Last but not least, there is an \FPT algorithm for solving the 2-stage stochastic IP due to Hemmecke and Schultz~\cite{HemmeckeS03}.
This algorithm is, however, based on a well quasi ordering argument yielding a bound on the size of the Graver basis for these IPs.
Very recently Klein~\cite{Klein19} presented a constructive approach using Steinitz lemma and give the first explicit (and seemingly optimal) bound on the size of the Graver basis for 2-stage (and multistage) IPs.
It is worth noting that possible applications of 2-stage stochastic IP are much less understood than those of its counterpart $n$-fold IP.

In the past few years, algorithmic research in this area has been mainly application-driven.
Substantial effort has been taken in order to find the right formalism that is easier to understand and yields algorithms having the best possible ratio between their generality and the achieved running time.
It turned out that the right formalism is connected with variants of the Gaifman graph (see e.g.~\cite{Dechter06}) of the matrix $A$ (for the definitions see the Preliminaries section).
%Thus we first define all of these graphs and when we understand this connection we proceed to our results.

\paragraph{Our Contribution.}
In this paper we focus on the incidence (Gaifman) graph.
We investigate the (negative) effect of the treedepth of the incidence Gaifman graph on tractability of ILP feasibility.

\begin{theorem}\label{thm:incidenceTD}
Given a matrix $A\in \{-1,0,1\}^{m \times n}$ and vectors $\ve{l}, \ve{u}\in \ZZ_\infty^n$.
Deciding whether the set defined by \eqref{StandardLinIP} is non-empty is \NP{}-hard even if $\veb = \mathbf{0}$ and $\td_I(A) \le 5$.
\end{theorem}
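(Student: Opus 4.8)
The plan is to give a polynomial-time reduction from a suitable NP-hard numerical problem (Subset Sum / Partition being the natural starting point) that outputs a \emph{homogeneous} system $A\vex=\mathbf{0}$, $\ve{l}\le\vex\le\ve{u}$ with $A\in\{-1,0,1\}^{m\times n}$ whose incidence graph admits an elimination forest of height at most $5$. Two standard devices set the stage. A Boolean variable is simply one with bounds $0\le x\le 1$. And since the right-hand side is forced to be $\mathbf{0}$, I introduce a single \emph{unit} variable $g$ pinned to $1$ by $l_g=u_g=1$; every constant that a constraint would otherwise want on its right-hand side is produced on the left using copies of $g$. With these, the source instance $\sum_i a_i x_i=t$ is turned into one homogeneous equation that forms the heart of the construction.

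The core gadget realizes a single weighted sum using only $\pm1$ entries. To place the coefficient $a_i$ on the Boolean $x_i$ I create auxiliary copies $c_i^{(1)},\dots,c_i^{(a_i)}$, each tied to $x_i$ by an equality constraint $c_i^{(s)}-x_i=0$, and analogously produce $t$ copies $g^{(1)},\dots,g^{(t)}$ of the unit $g$. A single \emph{balance} constraint $R$ then asserts $\sum_{i,s}c_i^{(s)}-\sum_s g^{(s)}=0$, which is equivalent to $\sum_i a_i x_i=t$. The incidence graph of this core already has small treedepth: take $R$ as the root, hang every $x_i$ and $g$ directly below it, hang each equality constraint below the variable it shares with $x_i$ (resp.\ $g$), and place each copy $c_i^{(s)}$ (resp.\ $g^{(s)}$) as a leaf under its equality. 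One checks that every incidence edge joins an ancestor to a descendant, so the height is at most $4$.

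It remains to upgrade this weakly hard core to a genuinely NP-hard instance while spending only one more level of depth. The complementary tractability results quoted in the abstract pin down what a hard instance must look like: since bounding, together with treedepth, either the maximum arity of the constraints or the maximum number of occurrences of a variable already yields tractability, the construction is forced to use \emph{both} a high-arity constraint (the balance row $R$) and a high-occurrence variable (the unit $g$). The structural invariant I would maintain is that all global interaction is funnelled through a constant number of such hub vertices placed at the top levels of the elimination forest, so that after deleting them the incidence graph decomposes into independent, shallow per-item gadgets. This is exactly what prevents a large complete bipartite subgraph from appearing, whose presence would force $\td_I(A)$ to grow, and it is where the final value $5$ comes from.

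I expect the main obstacle to be the tension between encoding enough arithmetic and keeping the depth at $5$. Because the entries lie in $\{-1,0,1\}$ and the system is homogeneous, the only ways to realize a large magnitude are through unit copies of $g$ or through doubling equalities of the form $d-y-y'=0,\ y'-y=0$; the former blows up polynomially with the magnitude, while the latter threatens to create long paths, and a path on $\ell$ vertices already has treedepth $\Theta(\log\ell)$. Keeping the magnitudes polynomially bounded — for instance by reducing from a strongly NP-hard source so that unit copies stay polynomial, or by flattening the doubling gadgets and re-routing them through additional hub vertices rather than along a path — is therefore the delicate step, and it is precisely this step that dictates the depth bound. Once the magnitude gadget is fixed, the equivalence (a solution of the system projects to a solution of the source instance and conversely) and the explicit height-$5$ elimination forest should be routine to verify.
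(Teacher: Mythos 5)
Your construction gets the right low-level trick --- simulating a coefficient $a$ in a $\{-1,0,1\}$ homogeneous system by $a$ unary copies tied together with equality constraints and summed in one high-arity balance row --- and this is essentially the same device the paper uses to replace the constraint $x^i_0 = y + p_i\cdot x^i_1$ by the constraints \eqref{eq:x_is_equal} and \eqref{eq:x_is_mod}. But the proposal has a genuine gap exactly where you flag the ``delicate step'': you never actually obtain NP-hardness. Reducing from \textsc{Subset Sum} or \textsc{Partition} with unary copies makes the instance size polynomial in the magnitudes $a_i$ and $t$, i.e., pseudo-polynomial in the source instance; since those problems are only weakly NP-hard, the resulting restriction is solvable in polynomial time and the reduction proves nothing. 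Your two proposed escapes both fail as stated: a strongly NP-hard numerical source with one linear target constraint does not exist off the shelf (\textsc{3-Partition} needs many constraints, and you would have to show how to add them without increasing the depth), and the doubling gadgets create paths of length $\Theta(\log a_i)$, which, as you yourself note, push the treedepth to $\Theta(\log\log a_i)$ rather than a constant.

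The missing idea in the paper is to reduce from \textsc{3-SAT} and use the Chinese Remainder Theorem: a single global variable $y$ encodes the whole truth assignment via its residues modulo distinct primes $p_1,\dots,p_{n'}$, one per formula variable, with a gadget forcing $y \bmod p_i \in \{0,1\}$; a clause on variables $v_e,v_f,v_g$ is checked by a second gadget computing $y \bmod (p_e p_f p_g)$ and using the box constraints to forbid the unique residue corresponding to the falsifying assignment. The point is that the Prime Number Theorem keeps every magnitude that must be written in unary (namely $p_i = \Oh(i\log i)$ and the products of three primes) polynomially bounded, so the unary-copy trick costs only polynomial size, while each gadget hangs off the single cut-vertex $y$ and individually has treedepth $4$, giving $\td_I(A)\le 5$. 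Without some mechanism of this kind for packing superpolynomially many combinatorial choices into polynomially bounded magnitudes, your plan cannot be completed.
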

We complement \cref{thm:incidenceTD} (\cref{sec:trac}),
by showing that \eqref{SIP} becomes fixed-parameter tractable
parameterized by $\|A\|_\infty$ and either:
\begin{itemize}[nosep]
\item treedepth plus $\min\{\maxdeg_C(A),\maxdeg_V(A)\}$, where
  $\maxdeg_C(A)$ is the maximum arity of any constraint and
  $\maxdeg_V(A)$ is the maximum number of constraints any
  variable occurs~in,
\item the vertex cover number of $G_I(A)$.
\end{itemize}

\subsection*{Preliminaries}
For integers $m < n$ by $[m:n]$ we denote the set $\left\{ m, m+1, \ldots, n \right\}$ and $[n]$ is a shorthand for $[1:n]$.
We use bold face letters for vectors and normal font when referring to their components, that is, $\vex$ is a vector and $x_3$ is its third component.
For vectors of vectors we first use superscripts to access the ``inner vectors'', that is, $\vex = (\vex^1, \ldots, \vex^n)$ is a vector of vectors and $\vex^3$ is the third vector in this collection.

\paragraph{From Matrices to Graphs.}
Let $A$ be an $m \times n$ integer matrix.
The \emph{incidence Gaifman graph} of $A$ is the bipartite graph $G_I = (R \cup C, E)$, where $R = \left\{ r_1, \ldots, r_m \right\}$ contains one vertex for each row of $A$ and $C = \left\{ c_1, \ldots, c_n \right\}$ contains one vertex for each column of $A$.
There is an edge $\{ r, c\}$ between the vertex $r \in R$ and $c \in C$ if $A(r,c) \neq 0$, that is, if row $r$ contains a nonzero coefficient in column $c$.
The \emph{primal Gaifman graph} of $A$ is the graph $G_P = (C, E)$, where $C$ is the set of columns of $A$ and $\{ c, c' \} \in E$ whenever there exists a row of $A$ with a nonzero coefficient in both columns $c$ and $c'$.
The \emph{dual Gaifman graph} of $A$ is the graph $G_D = (R, E)$, where $R$ is the set of rows of $A$ and $\{ r, r' \} \in E$ whenever there exists a column of $A$ with a nonzero coefficient in both rows $r$ and $r'$.

\paragraph{Treedepth.}
Undoubtedly, the most celebrated structural parameter for graphs is treewidth, however, in the case of ILPs bounding treewidth of any of the graphs defined above does not lead to tractability (even if the largest coefficient in $A$ is bounded as well see e.g.~\cite[Lemma~18]{KouteckyLO18}).
Treedepth is a structural parameter which is useful in the theory of so-called sparse graph classes, see e.g.~\cite{NesetrilOdM:Sprasity}. Let $G = (V,E)$ be a graph.
The treedepth of $G$, denoted $\td(G)$, is defined by the following recursive formula:
\[
\td(G) = \begin{cases}
	1 &\mbox{if } |V(G)| = 1, \\[0.2cm]
	1+\min_{v\in V(G)}\td(G - v) &\mbox{if $G$ is connected with } |V(G)| > 1, \\[0.2cm]
	\max_{i \in [k]} \td(G_i) & \text{\begin{minipage}{.5\textwidth}if $G_1, \ldots, G_k$ are connected components of~$G$.\end{minipage}}
\end{cases}
\]
%\todo[inline]{An example was asked by Rew \#2; it might be helpful to readers from the IP-theory community. Do we have any good example?}
Let $A$ be an $m \times n$ integer matrix.
The \emph{incidence treedepth} of $A$, denoted $\td_I(A)$, is the treedepth of its incidence Gaifman graph $G_I$.
The \emph{dual treedepth} of $A$, denoted $\td_D(A)$, is the treedepth of its dual Gaifman graph $G_D$.
The \emph{primal treedepth} is defined similarly.

The following two well-known theorems will be used in the proof of \cref{thm:incidenceTD}.
\begin{theorem}[Chinese Remainder Theorem]
	Let $p_1,\ldots,p_n$ be pairwise co-prime integers greater than~$1$ and let $a_1, \ldots, a_n$ be integers such that for all $i\in [n]$ it holds $0\le a_i<p_i$. Then there exists exactly one integer $x$ such that
	\begin{enumerate}[nosep]
		\item $0\le x < \prod_{i = 1}^{n}p_i$ and
		\item $\forall i\in[n]\,\colon\, x\equiv a_i\mod p_i.$
	\end{enumerate}
\end{theorem}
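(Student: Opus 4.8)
The plan is to prove uniqueness and existence separately, treating uniqueness first because it isolates the only genuinely arithmetic ingredient, and then obtaining existence by an explicit construction (a counting argument on equinumerous sets would also work, but the construction is more informative). Throughout I write $P = \prod_{i=1}^{n} p_i$ for the modulus bounding the target range.

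For uniqueness, suppose $x$ and $x'$ both satisfy conditions (1) and (2). Then for every $i \in [n]$ we have $x \equiv a_i \equiv x' \pmod{p_i}$, so $p_i \mid (x - x')$. The crucial step is to promote this to $P \mid (x - x')$: since the $p_i$ are pairwise co-prime, divisibility of one common value by each of them forces divisibility by their product. Finally, both $x$ and $x'$ lie in $[0, P)$, so $|x - x'| < P$; combined with $P \mid (x - x')$ this forces $x - x' = 0$, i.e. $x = x'$.

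For existence I would give a direct construction. For each $i$ set $P_i = P / p_i = \prod_{j \neq i} p_j$. Every factor of $P_i$ is some $p_j$ with $j \neq i$, each co-prime to $p_i$, hence $\gcd(P_i, p_i) = 1$. By B\'ezout's identity there is an integer $y_i$ with $P_i y_i \equiv 1 \pmod{p_i}$. Put $x_0 = \sum_{i=1}^{n} a_i P_i y_i$. For a fixed index $k$, every summand with $i \neq k$ is divisible by $p_k$ (because $p_k \mid P_i$ in that case), while the $i = k$ summand satisfies $a_k P_k y_k \equiv a_k \pmod{p_k}$; hence $x_0 \equiv a_k \pmod{p_k}$ for all $k$. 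Reducing $x_0$ modulo $P$ then yields the desired $x \in [0, P)$ satisfying both (1) and (2).

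The only real obstacle is the lemma invoked in the uniqueness step: that pairwise co-primality of $p_1, \ldots, p_n$ together with $p_i \mid d$ for every $i$ implies $P \mid d$. This is false for a general common multiple and genuinely uses coprimality, so it is the arithmetic heart of the statement. I would establish it by induction on $n$: assuming $\big(\prod_{i<n} p_i\big) \mid d$ and $p_n \mid d$, and noting that $p_n$ is coprime to $\prod_{i<n} p_i$, B\'ezout (equivalently, Euclid's lemma) gives $\big(p_n \prod_{i<n} p_i\big) \mid d$. The same coprimality fact underlies the invertibility of $P_i$ modulo $p_i$ used in the existence argument, so a single appeal to B\'ezout's identity powers both halves of the proof.
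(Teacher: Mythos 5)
Your proof is correct: the uniqueness argument (each $p_i$ divides $x-x'$, pairwise coprimality promotes this to $P \mid x-x'$, and the range restriction forces $x=x'$) and the existence construction via $P_i = P/p_i$ and B\'ezout inverses $y_i$ are the canonical textbook proof of the Chinese Remainder Theorem, and you correctly identify and handle the one genuinely arithmetic lemma (that pairwise coprimality upgrades common divisibility to divisibility by the product, and its twin, that $P_i$ is invertible modulo $p_i$). Note, however, that the paper does not prove this statement at all: it is listed among ``well-known theorems'' used as black boxes in the reduction of Theorem~1, so there is no in-paper argument to compare against; your proof simply supplies the standard justification the authors took for granted. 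One minor point worth making explicit if you write this up formally: both halves silently use that a product of integers each coprime to $m$ is itself coprime to $m$, which you correctly gesture at reducing to B\'ezout, but which deserves a one-line proof of its own since it is exactly where coprimality (as opposed to mere distinctness) enters.
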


\begin{theorem}[Prime Number Theorem]
	Let $\pi(n)$ denote the number of primes in $[n]$, then $\pi(n)\in\Theta(\frac{n}{\log n})$.%, or equivaletly if $p_n$ denotes the $n$-th prime number, then \mbox{$p_n\sim n\cdot \log(n)$}.
\end{theorem}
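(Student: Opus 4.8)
The bound $\pi(n)\in\Theta(n/\log n)$ is the elementary (Chebyshev-type) estimate, and the plan is to obtain the two sides separately from the central binomial coefficient $\binom{2n}{n}$, together with the two inequalities $\frac{4^n}{2n+1}\le\binom{2n}{n}\le 4^n$. Both follow at once by comparing $\binom{2n}{n}$ to the full binomial sum $\sum_{k=0}^{2n}\binom{2n}{k}=4^n$: the upper bound since one term is at most the whole sum, and the lower bound since the central term is the largest of the $2n+1$ terms and hence at least their average.

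For the lower bound on $\pi$ the plan is to control the prime factorization of $\binom{2n}{n}$ via Legendre's formula. Writing $e_p$ for the exponent of a prime $p$ in $\binom{2n}{n}$, one has $e_p=\sum_{i\ge 1}\bigl(\lfloor 2n/p^i\rfloor-2\lfloor n/p^i\rfloor\bigr)$; each summand is either $0$ or $1$, and all summands vanish once $p^i>2n$, which forces $p^{e_p}\le 2n$ for every prime $p$. Multiplying over all primes $p\le 2n$ then yields $\binom{2n}{n}=\prod_{p\le 2n}p^{e_p}\le (2n)^{\pi(2n)}$. Combining this with $\binom{2n}{n}\ge 4^n/(2n+1)$ and taking logarithms gives $\pi(2n)\log(2n)\ge n\log 4-\log(2n+1)$, so $\pi(2n)=\Omega(n/\log n)$; since $\pi$ is nondecreasing, applying this to $\lfloor m/2\rfloor$ transfers the estimate to $\pi(m)=\Omega(m/\log m)$ for all~$m$.

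For the upper bound on $\pi$ the plan is to first establish the primorial estimate $\prod_{p\le x}p\le 4^x$ by strong induction on $\lfloor x\rfloor$. The inductive step handles an odd argument $2m+1$ by observing that every prime in $(m+1,2m+1]$ divides $\binom{2m+1}{m}$, and that $\binom{2m+1}{m}=\binom{2m+1}{m+1}$ are two equal terms in a sum bounded by $2^{2m+1}$, so their common value, and hence the product of those primes, is at most $4^m$; the even argument reduces to the preceding integer. Granting this, every prime $p$ in the range $(\sqrt{x},x]$ exceeds $\sqrt{x}$, so $x^{(\pi(x)-\pi(\sqrt{x}))/2}<\prod_{\sqrt{x}<p\le x}p\le 4^x$, and taking logarithms gives $\pi(x)\le\sqrt{x}+\frac{2x\log 4}{\log x}=\Oh(x/\log x)$.

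The two halves together yield $\pi(n)\in\Theta(n/\log n)$. The main technical obstacle is the bookkeeping in the two factorization bounds: on the lower-bound side, verifying that each Legendre summand is $0$ or $1$ and that prime powers exceeding $2n$ contribute nothing, so that $p^{e_p}\le 2n$; and on the upper-bound side, making the primorial induction clean, in particular reducing the even case correctly and justifying that primes in $(m+1,2m+1]$ divide $\binom{2m+1}{m}$. Everything after these two inequalities is routine logarithmic estimation. (Note that the sharper asymptotic $\pi(n)\sim n/\log n$ is genuinely deeper and not needed here; the reduction in \cref{thm:incidenceTD} only uses the $\Theta$-bound to guarantee enough small primes, which this elementary argument already provides.)
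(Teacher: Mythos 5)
Your proposal is correct, but note that the paper itself offers no proof of this statement: it is quoted as a classical background fact (under the traditional name ``Prime Number Theorem'', though only the weaker Chebyshev-type $\Theta$-bound is asserted), and the paper's reduction merely consumes the consequence $p_i=\Oh(i\log i)$. Your elementary argument is the standard Chebyshev/Erd\H{o}s route and all its steps check out: the bounds $4^n/(2n+1)\le\binom{2n}{n}\le 4^n$, the Legendre-formula observation that each summand $\lfloor 2n/p^i\rfloor-2\lfloor n/p^i\rfloor$ lies in $\{0,1\}$ and vanishes for $p^i>2n$ (hence $p^{e_p}\le 2n$ and $\binom{2n}{n}\le(2n)^{\pi(2n)}$), the primorial bound $\prod_{p\le x}p\le 4^x$ via the fact that every prime in $(m+1,2m+1]$ divides $\binom{2m+1}{m}\le 4^m$, and the final split at $\sqrt{x}$ giving $\pi(x)\le\sqrt{x}+2x\log 4/\log x$. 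You are also right, and it is worth having said explicitly, that the genuine asymptotic $\pi(n)\sim n/\log n$ is not needed anywhere: the reduction in Theorem~1 only needs polynomially many sufficiently small primes, which your self-contained argument already delivers. In short, you have supplied a valid proof where the paper deliberately supplies none; the only thing to tidy in a full write-up is the bookkeeping you yourself flagged (base cases and the even step of the primorial induction, e.g.\ $2m>2$ is composite so $\prod_{p\le 2m}p=\prod_{p\le 2m-1}p$).
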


It is worth pointing out that, given a positive integer $n$ encoded in unary, it is possible to the $n$-th prime in polynomial time.

\section{Proof of \cref{thm:incidenceTD}}
Before we proceed to the proof of \cref{thm:incidenceTD} we include a brief sketch of its idea.
To prove \NP-hardness, we will give a polynomial time reduction from \textsc{3-SAT} which is well known to be \NP-complete~\cite{GareyJ79}.
The proof is inspired by the \NP-hardness proof for ILPs given by a set of inequalities, where the primal graph is a star, of Eiben et. al~\cite{EibenGKO18}.

\paragraph{Proof Idea.}
Let $\varphi$ be a 3-CNF formula.
We encode an assignment into a variable~$y$.
With every variable $v_i$ of the formula $\varphi$ we associate a prime number $p_i$.
We make \mbox{$y \bmod p_i$} be the boolean value of the variable $v_i$; i.e., using auxiliary gadgets we force $y\bmod p_i$ to always be in $\{0,1\}$.
Further, if for a clause $C \in \varphi$ by $\| C \|$ we denote the product of all of the primes associated with the variables occurring in $C$,
then, by Chinese Remainder Theorem, there is a single value in $[ \| C \| ]$, associated with the assignment that falsifies $C$, which we have to forbid for $y \bmod \| C \|$.
We use the box constraints, i.e., the vectors $\ve{l}, \ve{u}$, for an auxiliary variable taking the value $y \bmod \| C \|$ to achieve this.
For example let $\varphi = (v_1 \lor \neg v_2 \lor v_3)$ and let the primes associated with the three variables be $2,3,$ and $5$, respectively.
Then we have $\| (v_1 \lor \neg v_2 \lor v_3) \| = 30$ and, since $v_1 = v_3 = \texttt{false}$ and $v_2 = \texttt{true}$ is the only assignment falsifying this clause, we have that $21$ is the forbidden value for $y \bmod 30$.
%it is worth pointing out that for a clause $C$ there is exactly one number in $\| C \|$ yielding the falsifying assignment to the three variables in $C$.
Finally, the \eqref{SIP} constructed from $\varphi$ is feasible if and only if there is a satisfying assignment for $\varphi$.%, since SIP is naturally a conjunction of the all of the above conditions.

\begin{proof}[of \cref{thm:incidenceTD}]
Let $\varphi$ be a 3-CNF formula with $n'$ variables $v_1,\ldots, v_{n'}$ and $m'$ clauses $C_1,\ldots, C_{m'}$ (an instance of \textsc{3-SAT}).
Note that we can assume that none of the clauses in $\varphi$ contains a variable along with its negation.
We will define an SIP, that is, vectors $\veb, \ve{l}, \ve{u}$, and a matrix $A$ with $\Oh((n'+m')^5)$ rows and columns, whose solution set is non-empty if and only if a satisfying assignment exists for $\varphi$.
Furthermore, we present a decomposition of the incidence graph of the constructed SIP proving that its treedepth is at most~5.
We naturally split the vector $\vex$ of the SIP into subvectors associated with the sought satisfying assignment, variables, and clauses of $\varphi$, that is, we have $\vex = \left( y, \vex^1, \ldots, \vex^{n'}, \ve{z}^1, \ldots, \ve{z}^{m'}\right)$.
Throughout the proof $p_i$ denotes the $i$-th prime number.

\paragraph{Variable Gadget.}
We associate the $\vex^i = \left( x^i_0, \ldots, x^i_{p_i} \right)$ part of $\vex$ with the variable $v_i$ and bind the assignment of $v_i$ to $y$.
We add the following constraints
\begin{align}
x^i_1 &= x^i_\ell & \forall \ell \in \left[2 : p_i \right]  \label{eq:x_is_equal} \\
x^i_0 &= y+\sum_{\ell=1}^{p_i}x^i_\ell &										\label{eq:x_is_mod}
\end{align}
and box constraints
\begin{align}
-\infty &\le x^i_\ell \le \infty & \forall \ell \in [p_i]  	\label{eq:x_unbounded_box} \\
0 &\le x^i_0 \le 1 &																				\label{eq:x_truth_box}
\end{align}
to the SIP constructed so far.
\begin{claim}\label{clm:variables}
For given values of $x^i_0$ and $y$, one may choose the values of $x^i_\ell$ for $\ell\in [p_i]$ so that \eqref{eq:x_is_equal} and \eqref{eq:x_is_mod} are satisfied if and only if
$x^i_0 \equiv y \mod p_i$.
\end{claim}
\begin{claimproof}
By \eqref{eq:x_is_equal} we know $x^i_1 = \cdots = x^i_{p_i}$ and thus by substitution we get the following equivalent form of \eqref{eq:x_is_mod}
\begin{equation}\label{eq:xi0andTheLargeCoefficient}
x^i_0 = y + p_i \cdot x^i_1 \,.
\end{equation}
But this form is equivalent to $x^i_0 \equiv y \mod p_i$.
\end{claimproof}
Note that by (the proof of) the above claim the conditions \eqref{eq:x_is_equal} and \eqref{eq:x_is_mod} essentially replace the large coefficient ($p_i$) used in the condition \eqref{eq:xi0andTheLargeCoefficient}.
This is an efficient trade-off between large coefficients and incidence treedepth which we are going to exploit once more when designing the clause gadget.

By the above claim we get an immediate correspondence between $y$ and truth assignments for $v_1, \ldots, v_{n'}$.
For an integer $w$ and a variable $v_i$ we define the following mapping
\[
\assign(w, v_i) =
\begin{cases}
\texttt{true} & \textrm{if } w \equiv 1 \mod p_i \\
\texttt{false} & \textrm{if } w \equiv 0 \mod p_i \\
\textrm{undefined} & \textrm{otherwise}.
\end{cases}
\]
Notice that \eqref{eq:x_truth_box} implies that the mapping $\assign(y, v_i) \in \{ \texttt{true}, \texttt{false} \}$ for $i \in \left[ n' \right]$.
We straightforwardly extend the mapping $\assign(\cdot, \cdot)$ for tuples of variables as follows.
For a tuple $\ve{a}$ of length $\ell$, the value of $\assign(w,\ve{a})$ is $(\assign(w,a_1), \ldots, \assign(w,a_\ell))$ and we say that $\assign(w,\ve{a})$ is defined if all of its components are defined.

\paragraph{Clause Gadget.}
Let $C_j$ be a clause with variables $v_e,v_f,v_g$.
We define $\| C_j \|$ as the product of the primes associated with the variables occurring in $C_j$, that is, $\| C_j \| = p_e \cdot p_f \cdot p_g$.
We associate the $\ve{z}^j = \left( z^j_0, \ldots, z^j_{\| C_j \|} \right)$ part of $\vex$ with the clause $C_j$.
Let $d_j$ be the unique integer in $[\| C_j \|]$ for which $\assign(d_j, (v_e,v_f,v_g))$ is defined and gives the falsifying assignment for $C_j$.
The existence and uniqueness of $d_j$ follows directly from the Chinese Remainder Theorem.
We add the following constraints
\begin{align}
z^j_1 &= z^j_\ell & \forall \ell \in \left[2 : \| C_j \| \right]  \label{eq:z_is_equal} \\
z^j_0 &= y+\sum_{1 \le \ell \le \| C_j \|} z^j_\ell &							\label{eq:z_is_mod}
\end{align}
and box constraints
\begin{align}
-\infty &\le z^j_\ell \le \infty & \forall \ell \in [\| C_j \|]  	\label{eq:z_unbounded_box} \\
 d_j + 1 &\le z^j_0 \le \| C_j \| + d_j - 1 &											\label{eq:z_truth_box}
\end{align}
to the SIP constructed so far.
\begin{claim}\label{clm:clauses}
Let $C_j$ be a clause in $\varphi$ with variables $v_e, v_f, v_g$.
For given values of $y$ and $z^j_0$ such that the value $\assign(y,(v_e, v_f, v_g))$ is defined, one may choose the values of $z^j_\ell$ for $\ell\in [\|C_j\|]$ so that \eqref{eq:z_is_equal}, \eqref{eq:z_is_mod},
\eqref{eq:z_unbounded_box} and \eqref{eq:z_truth_box} are satisfied if and only if $\assign(y,(v_e, v_f, v_g))$ satisfies $C_j$.
\end{claim}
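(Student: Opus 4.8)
The plan is to reuse the mechanism of \cref{clm:variables} to turn the first three constraint families into a single congruence on $z^j_0$, and then to read off satisfaction of $C_j$ from the box \eqref{eq:z_truth_box} by a counting argument on residues. First I would argue exactly as in \cref{clm:variables}: \eqref{eq:z_is_equal} forces $z^j_1=\cdots=z^j_{\|C_j\|}$, and substituting this common value into \eqref{eq:z_is_mod} collapses that equation to $z^j_0=y+\|C_j\|\cdot z^j_1$. Since \eqref{eq:z_unbounded_box} leaves the variables $z^j_\ell$ with $\ell\in[\|C_j\|]$ free, such a choice exists if and only if $\|C_j\|$ divides $z^j_0-y$, i.e.\ $z^j_0\equiv y\pmod{\|C_j\|}$; when it does, setting every $z^j_\ell$ equal to $(z^j_0-y)/\|C_j\|$ works.

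Next I would combine this congruence with \eqref{eq:z_truth_box}. The interval $[d_j+1,\ \|C_j\|+d_j-1]$ contains exactly $\|C_j\|-1$ consecutive integers, so modulo $\|C_j\|$ it hits every residue class except the one of $d_j$. Hence a value of $z^j_0$ (itself a free entry of the block) that lies in the box and satisfies $z^j_0\equiv y\pmod{\|C_j\|}$ exists if and only if $y\not\equiv d_j\pmod{\|C_j\|}$; this is precisely the condition under which the whole $z$-block can be completed.

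Finally I would convert the residue condition into a statement about truth assignments. As $\|C_j\|=p_ep_fp_g$ with pairwise coprime factors, the Chinese Remainder Theorem gives $y\equiv d_j\pmod{\|C_j\|}$ iff $y\equiv d_j$ modulo each of $p_e,p_f,p_g$; since $\assign(y,(v_e,v_f,v_g))$ is assumed defined (so each residue of $y$ lies in $\{0,1\}$), this is in turn equivalent to $\assign(y,(v_e,v_f,v_g))=\assign(d_j,(v_e,v_f,v_g))$. By the definition of $d_j$ the latter is the unique falsifying assignment of $C_j$, so $y\equiv d_j\pmod{\|C_j\|}$ exactly when $\assign(y,\cdot)$ falsifies $C_j$, and therefore $y\not\equiv d_j$ exactly when $\assign(y,\cdot)$ satisfies $C_j$. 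Chaining the equivalences of the three steps proves the claim.

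The one genuinely delicate point is the middle step: the endpoints in \eqref{eq:z_truth_box} must be chosen so that the box misses exactly one residue class and that this class is exactly $d_j$. An off-by-one slip in either endpoint would either forbid no assignment (making every clause vacuously satisfiable) or forbid more than the single falsifying one; the uniqueness of $d_j$ supplied by the Chinese Remainder Theorem is what guarantees that excluding a single class suffices.
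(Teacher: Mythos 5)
Your proof is correct and follows essentially the same route as the paper's: constraints \eqref{eq:z_is_equal} and \eqref{eq:z_is_mod} collapse to the congruence $z^j_0\equiv y\pmod{\|C_j\|}$ exactly as in \cref{clm:variables}, and the box \eqref{eq:z_truth_box} excludes precisely the residue class of $d_j$, which by the Chinese Remainder Theorem corresponds to the unique falsifying assignment. You merely spell out two details the paper leaves implicit (the count of $\|C_j\|-1$ consecutive integers in the box, and the prime-by-prime comparison of $y$ with $d_j$), which is a faithful expansion rather than a different argument.
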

\begin{claimproof}
%We have already seen that the first claim together with \eqref{eq:x_truth_box} is equivalent to $\assign(y, (v_e, v_f, v_g))$ is defined.
Similarly to the proof of the \cref{clm:variables}, \eqref{eq:z_is_equal} and \eqref{eq:z_is_mod} together are equivalent to $z^j_0 \equiv y \mod \| C_j \|$.
Finally, by \eqref{eq:z_truth_box} we obtain that $z^j_0 \neq d_j$ which holds if and only if $\assign(y,(v_e, v_f, v_g))$ satisfies $C_j$.
\end{claimproof}

Let $A\vex = \ve{0}$ be the SIP with constraints \eqref{eq:x_is_equal}, \eqref{eq:x_is_mod}, \eqref{eq:z_is_equal}, and \eqref{eq:z_is_mod}
and box constraints $\ve{l} \le \vex \le \ve{u}$ given by \eqref{eq:x_unbounded_box}, \eqref{eq:x_truth_box}, \eqref{eq:z_unbounded_box}, \eqref{eq:z_truth_box}, and $-\infty \le y \le \infty$.
By the \cref{clm:variables}, constraints \eqref{eq:x_is_equal}, \eqref{eq:x_is_mod}, \eqref{eq:x_unbounded_box}, \eqref{eq:x_truth_box},
are equivalent to the assertion that $\assign(y, (v_1, \ldots, v_{n'}))$ is defined.
Then by the \cref{clm:clauses}, constraints \eqref{eq:z_is_equal}, \eqref{eq:z_is_mod}, \eqref{eq:z_unbounded_box}, \eqref{eq:z_truth_box}
are equivalent to checking that every clause in $\varphi$ is satisfied by $\assign(y, (v_1, \ldots, v_{n'}))$.
This finishes the reduction and the proof of its correctness.

%By the two claims above we have that $A\vex = \ve{0}, \ve{l} \le \vex \le \ve{u}$ is feasible if and only if $\assign(y, (v_1, \ldots, v_{n'}))$ is defined and satisfies every clause in $\varphi$.

In order to finish the proof we have to bound the number of variables and constraints in the presented SIP and to bound the incidence treedepth of $A$.
It follows from the Prime Number Theorem that $p_i=\Oh(i\log i)$.
Hence, the number of rows and columns of $A$ is at most $(n'+m')p_{n'}^3=\Oh((n'+m')^5)$.
\begin{claim}\label{clm:incidence_treedepth}
It holds that $\td_I(A) \le 5$.
\end{claim}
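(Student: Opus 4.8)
The plan is to exhibit an explicit treedepth decomposition of $G_I$ of depth $5$ and to verify it directly against the recursive formula for $\td$. First I would record the adjacencies of $G_I$ in terms of the gadgets. Write $M_i$ for the row vertex of the constraint \eqref{eq:x_is_mod} and $Q^i_\ell$ (for $\ell\in[2:p_i]$) for the row vertices of the constraints \eqref{eq:x_is_equal}; analogously write $N_j$ for the row of \eqref{eq:z_is_mod} and $P^j_\ell$ for the rows of \eqref{eq:z_is_equal}. Then $M_i$ is incident exactly to $y$ and to all of $x^i_0,\dots,x^i_{p_i}$; each $Q^i_\ell$ is incident only to $x^i_1$ and $x^i_\ell$; the column $x^i_0$ is incident only to $M_i$; and the clause gadgets are described symmetrically, with $z$, $N_j$, $P^j_\ell$ in place of $x$, $M_i$, $Q^i_\ell$. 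In particular the only vertex shared between two distinct gadgets is $y$.

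Second, I would peel the decomposition from the top following the recursive definition of $\td$. Deleting $y$ leaves one connected component per gadget, since $y$ is the unique vertex they share, so it suffices to bound the treedepth of a single gadget. Inside a variable gadget I would delete $M_i$: this isolates $x^i_0$ (treedepth $1$) and leaves the ``spider'' whose center $x^i_1$ is joined to every $Q^i_\ell$, each $Q^i_\ell$ in turn joined to its leaf $x^i_\ell$. Deleting the center $x^i_1$ from this spider breaks it into the disjoint edges $\{Q^i_\ell,x^i_\ell\}$, each a $K_2$ of treedepth $2$. Unwinding the recursion gives treedepth at most $1+2=3$ for the spider, hence at most $1+\max\{1,3\}=4$ for the whole variable gadget, and finally $1+4=5$ for $G_I$ after reinstating $y$. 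The clause gadgets have the identical shape, so they contribute the same bound of $4$ after $y$ is deleted.

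The only point that needs care, and which I regard as the crux, is the high-arity row vertex $M_i$ (respectively $N_j$): in the decomposition it must be an ancestor of all $p_i+1$ columns of its gadget, because \eqref{eq:x_is_mod} makes it adjacent to each of them. This is exactly what the peeling order guarantees, since after deleting $y$ the vertex $M_i$ sits at the top of its component, so $x^i_0$ and $x^i_1$ become its children and every remaining $x^i_\ell$ lies below $x^i_1$ through $Q^i_\ell$; thus all of them are descendants of $M_i$. I would then check that the remaining edge types, namely $\{y,M_i\}$, $\{Q^i_\ell,x^i_1\}$ and $\{Q^i_\ell,x^i_\ell\}$, are ancestor--descendant pairs, which is immediate from the order of deletions. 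It is worth stressing that each gadget genuinely contains the $4$-cycle $M_i\,x^i_1\,Q^i_\ell\,x^i_\ell$, so $G_I$ is not a forest and one cannot invoke a bare tree bound; the point is rather that deleting the two vertices $y$ and $M_i$ (and then $x^i_1$) collapses each gadget to a disjoint union of single edges, which is precisely what pins the depth at $5$.
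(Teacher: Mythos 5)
Your proof is correct and takes essentially the same approach as the paper's: delete the cut-vertex $y$, then inside each gadget delete the high-arity row of \eqref{eq:x_is_mod} (resp.\ \eqref{eq:z_is_mod}) together with $x^i_1$ (resp.\ $z^j_1$), leaving components of at most two vertices. The only cosmetic difference is that you peel these last two vertices in two nested recursion steps rather than removing them as a pair, which yields the identical bound $1+1+1+2=5$.
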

\begin{claimproof}
Let $G$ be the incidence graph of the matrix $A$.
It is easy to verify that $y$ is a cut-vertex in $G$.
Observe that each component of $G - y$ is now either a variable gadget for $v_i$ with $i \in [n']$ (we call such a component a \emph{variable component}) or a clause gadget for $C_j$ with $j \in [m']$ (we call such a component a \emph{clause component}).
Let $G_v^i$ be the variable component (of $G - y$) containing variables $\vex^i$ and $G_c^j$ be the clause component containing variables $\ve{z}^j$.
Let $t_v = \max_{\ell \in [n']} \td(G^\ell_v)$ and $t_c = \max_{\ell \in [m']} \td(G^\ell_c)$.
It follows that $\td(G) \le 1 + \max( t_v, t_c)$.

Refer to \cref{fig:variableGadget}.
Observe that if we delete the variable $x^i_1$ together with the constraint \eqref{eq:x_is_mod} from $G_v^i$, then each component in the resulting graph contains at most two vertices.
Each of these components contains either
\begin{itemize}[nosep]
\item
a variable $x^i_\ell$ and an appropriate constraint \eqref{eq:x_is_equal} (the one containing $x^i_\ell$ and~$x^i_0$) for some $\ell \in [2 \,:\, p_i]$ or
\item
the variable $x^i_0$.
\end{itemize}
Since treedepth of an edge is 2 and treedepth of the one vertex graph is 1, we have that $t_v \le 4$.
\begin{figure}[bt]
	\begin{minipage}{.6\textwidth}
	%\usetikzlibrary{calc,positioning}

\begin{tikzpicture}[node distance=1.2cm]
\tikzstyle{deletionSet}=[fill=gray!40]
\tikzstyle{eqn}=[draw]
\tikzstyle{var}=[draw,circle,inner sep=2pt]

\node[var,deletionSet] (y) {$y$};
\node[eqn,below of=y,deletionSet] (y-xi) {$x^i_0 = y + \sum_{\ell = 1}^{p_i} x^i_\ell$};

\begin{scope}[shift={(-1.6cm,-2.5cm)},node distance=1.6cm]  %yshift=-2.5, xshift=-1.6, node distance=1.6]
\node[var] (xi0) {$x^i_0$};
\node[var,right of=xi0,deletionSet] (xi1) {$x^i_1$};
\end{scope}

\begin{scope}[shift={(-1.4cm,-3.8cm)}, node distance=3cm] %yshift=-3.8, xshift=-1.4, node distance=3]
\node[eqn] (xi1xi2) {$x^i_1 = x^i_2$};
\node[eqn,right of=xi1xi2] (xi1xipi) {$x^i_1 = x^i_2$};
\node at ($(xi1xi2)!.5!(xi1xipi)$) {$\cdots$};
\end{scope}

\node[var,below of=xi1xi2] (xi2) {$x^i_2$};
\node[var,below of=xi1xipi] (xipi) {$x^i_{p_i}$};
\node at ($(xi2)!.5!(xipi)$) {$\cdots$};

\draw (y) to (y-xi);
\draw (y-xi) to (xi0);
\draw (y-xi) to (xi1);
\draw (xi1) to (xi1xi2);
\draw (xi1) to (xi1xipi);
\draw (xi1xi2) to (xi2);
\draw (xi1xipi) to (xipi);

\draw (y-xi) to[out=180,in=190] (xi2);
\draw (y-xi) to[out=0,in=0] (xipi);
\end{tikzpicture}
	\end{minipage}
	\begin{minipage}{.4\textwidth}
	\caption{\label{fig:variableGadget}%
	The variable gadget for $u_i$ of 3-SAT instance together with the global variable $y$.
	Variables (of the IP) are in circular nodes while equations are in rectangular ones.
	The nodes deleted in the proof of \cref{clm:incidence_treedepth} have light gray background.
	}
	\end{minipage}
\end{figure}
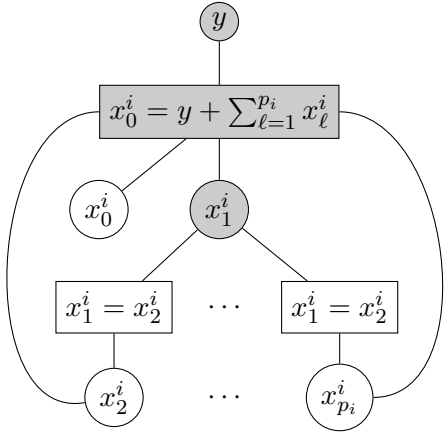

The bound on $t_c$ follows the same lines as for $t_v$, since indeed the two gadgets have the same structure.
Now, after deleting $z^j_1$ and \eqref{eq:z_is_mod} in $G_c^j$ we arrive to a graph with treedepth of all of its components again bounded by two (in fact, none of its components contain more than two vertices).
Thus, $t_v \le 4$ and the claim follows.
\end{claimproof}
The theorem follows by combining \cref{clm:variables}, \cref{clm:clauses}, and \cref{clm:incidence_treedepth}.
\end{proof}

\section{Complementary Tractability Results}
\label{sec:trac}

\paragraph{Treedepth and Degree Restrictions} It is worth noting that the proof of \cref{thm:incidenceTD} crucially
relies on having variables as well as constraints which have high
degree in the incidence graph.
Thus, it is natural to ask whether this is necessary or, equivalently, whether bounding the degree of variables, constraints, or both leads to tractability.
It is well known that if a graph $G$ has bounded degree and treedepth, then it is of bounded size, since indeed the underlying decomposition tree has bounded height and degree and thus bounded number of vertices.
Let \eqref{SIP} with $n$ variables be given.
Let $\maxdeg_C(A)$ denote the maximum arity of a constraint in its constraint matrix $A$ and let $\maxdeg_V(A)$ denote the maximum occurrence of a variable in constraints of $A$.
In other words, $\maxdeg_C(A)$ denotes the maximum number of nonzeros in a row of $A$ and $\maxdeg_V(A)$ denotes the maximum number of nonzeros in a column of $A$.
Now, we get that ILP can be solved in time $f(\maxdeg_C(A), \maxdeg_V(A), \td_I(A))L^{\bigoh(1)}$, where $f$ is some computable function and $L$ is the length of the encoding of the given ILP thanks to Lenstra's algorithm~\cite{Lenstra83}.

The above observation can in fact be
strengthened---namely, if the arity of all the constraints or the number of occurences of all the variables in the given SIP is bounded, then we obtain a bound on either primal or dual treedepth.
This is formalized by the following lemma.
\begin{lemma}\label{lem:treedepthDeltaBound}
For every \eqref{SIP} we have $$\td_P(A) \le \maxdeg_C(A) \cdot \td_I(A)\qquad\textrm{and}\qquad\td_D(A) \le \maxdeg_V(A) \cdot \td_I(A).$$
\end{lemma}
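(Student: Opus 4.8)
The plan is to prove the first inequality, $\td_P(A)\le\maxdeg_C(A)\cdot\td_I(A)$, and then obtain the second for free by transposition. Since the incidence graph is symmetric in its treatment of rows and columns, we have $G_I(A^\top)=G_I(A)$, while $G_P(A^\top)=G_D(A)$ and $\maxdeg_C(A^\top)=\maxdeg_V(A)$; applying the primal bound to $A^\top$ therefore gives $\td_D(A)=\td_P(A^\top)\le\maxdeg_C(A^\top)\cdot\td_I(A^\top)=\maxdeg_V(A)\cdot\td_I(A)$. Writing $\Delta:=\maxdeg_C(A)$ for the largest support of a row (equivalently, the largest degree of a row-vertex of $G_I$), I would prove the primal bound by induction on $|V(G_I(A))|$, following the recursive definition of treedepth. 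The degenerate all-zero, zero-row and zero-column cases are trivial and can be discarded, so I may assume $\Delta\ge 1$.

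The key tool is an elementary fact about completing a bounded set into a clique. For a graph $H$ and a set $S\subseteq V(H)$, let $H^{+S}$ denote $H$ with all pairs inside $S$ made adjacent; then $\td(H^{+S})\le|S|+\td(H-S)$. I would prove this by stacking a path through the vertices of $S$ on top of an optimal elimination forest of $H-S$: edges inside $S$ are resolved by the path, edges incident to $S$ by the fact that the path sits above everything, and all remaining edges (which avoid $S$, since the edges $H^{+S}$ adds all lie inside $S$) by the elimination forest of $H-S$.

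Now for the induction itself. The disconnected case splits over connected components and the one-vertex base case is immediate, so suppose $G_I(A)$ is connected with more than one vertex, and let $v^\ast$ attain $\td_I(A)=1+\td(G_I(A)-v^\ast)$. If $v^\ast$ is a column, deleting it from $A$ removes exactly that vertex from the primal graph, $G_P(A)-v^\ast=G_P(A-v^\ast)$, and the induction hypothesis yields $\td_P(A)\le 1+\td_P(A-v^\ast)\le 1+\Delta(\td_I(A)-1)\le\Delta\cdot\td_I(A)$, using $\Delta\ge 1$. The interesting case is when $v^\ast$ is a row $r^\ast$: here deleting $r^\ast$ keeps all column-vertices but destroys precisely those primal edges whose only common nonzero row was $r^\ast$, so $G_P(A)=G_P(A-r^\ast)^{+N(r^\ast)}$, where $N(r^\ast)$ is the support of $r^\ast$ and $|N(r^\ast)|\le\Delta$. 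Applying the helper lemma, and then the induction hypothesis to $A-r^\ast$ (whose incidence treedepth is $\td_I(A)-1$ and whose maximum row support is still at most $\Delta$) together with vertex-deletion monotonicity of treedepth, gives
\[
\td_P(A)\le |N(r^\ast)|+\td\bigl(G_P(A-r^\ast)-N(r^\ast)\bigr)\le \Delta+\Delta\,(\td_I(A)-1)=\Delta\cdot\td_I(A).
\]
I expect this row case to be the crux: the point is to recognize that deleting a single row perturbs the primal graph only by adding a clique on that row's at most $\Delta$ variables, and then to quantify how much completing a bounded set into a clique can raise treedepth, which is exactly the content of the helper lemma.
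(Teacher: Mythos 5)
Your proof is correct and follows essentially the same route as the paper's: induction along the treedepth recursion of $G_I(A)$, with the easy case when the optimally deleted vertex has the same type as the target graph's vertices, and, in the cross-type case, the bound $\td(G)\le|S|+\td(G-S)$ applied to the at most $\maxdeg_C(A)$ (resp.\ $\maxdeg_V(A)$) neighbours of the deleted vertex. The only cosmetic differences are that the paper proves the dual inequality and invokes symmetry for the primal one, and that it phrases your clique-completion identity $G_P(A)=G_P(A-r^\ast)^{+N(r^\ast)}$ as the equivalent observation that $G_D(A)-X$ is a subgraph of $G_D(A-v)$.
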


\begin{proof}
The proof idea is to investigate the definition of the incidence treedepth of $A$,
which essentially boils down to recursively eliminating either a row, or a column, or decomposing a block-decomposable matrix into its blocks.
Then, say for the second inequality above, eliminating a column can be replaced by eliminating all the at most $\maxdeg_V(A)$ rows that contain non-zero entries in this column.

We now proceed to the proof itself---in particular, we prove only the second inequality, as the first one is completely symmetric.
The proof is uses induction with respect to the total number of rows and columns of the matrix $A$.
The base of the induction, when $A$ has one row and one column, is trivial, so we proceed to the induction step.

Observe that $G_I(A)$ is disconnected if and only if $G_D(A)$ is disconnected if and only if $A$ is a block-decomposable matrix.
Moreover, the incidence treedepth of $A$ is the maximum incidence treedepth among the blocks of $A$, and the same also holds for the dual treedepth.
Hence, in this case we may apply the induction hypothesis to every block of $A$ and combine the results in a straightforward manner.

Assume then that $G_I(A)$ is connected. Then
$$\td(G_I(A))=1+\min_{v\in V(G_I(A))} \td(G_I(A)-v).$$
Let $v$ be the vertex for which the minimum on the right hand side is attained.
We consider two cases: either $v$ is a row of $A$ or a column of $A$.

Suppose first that $v$ is a row of $A$. Then we have
\begin{eqnarray*}
\td(G_D(A)) & \leq & 1+\td(G_D(A)-v)\\
            & \leq & 1+\maxdeg_V(A)\cdot \td(G_I(A)-v)\\
            & =    & 1+\maxdeg_V(A)\cdot (\td(G_I(A))-1)\\
            & \leq & \maxdeg_V(A)\cdot \td(G_I(A))
\end{eqnarray*}
as required, where the second inequality follows from applying the induction assumption to $A$ with the row $v$ removed.

Finally, suppose that $v$ is a column of $A$. Let $X$ be the set of rows of $A$ that contain non-zero entries in column $v$; then $|X|\leq \maxdeg_V(A)$ and $X$ is non-empty, because $G_I(A)$ is connected.
If we denote by $A-v$ the matrix obtained from $A$ by removing column $v$, then we have
\begin{eqnarray*}
\td(G_D(A)) & \leq & |X|+\td(G_D(A)-X) \\
            & \leq & \maxdeg_V(A)+\td(G_D(A-v))\\
            & \leq & \maxdeg_V(A)+\maxdeg_V(A)\cdot \td(G_I(A-v))\\
            & \leq & \maxdeg_V(A)\cdot \td(G_I(A)),
\end{eqnarray*}
as required. Here, in the second inequality we used the fact that $G_D(A)-X$ is a subgraph of $G_D(A-v)$, while in the third inequality we used the induction assumption for the matrix $A-v$.
\end{proof}

It follows that if we bound either $\maxdeg_V(A)$ or $\maxdeg_C(A)$, that is, formally set $\maxdeg(A) = \min \left\{ \maxdeg_V(A), \maxdeg_C(A) \right\}$,
then we can use the results of Koutecký et al.~\cite{KouteckyLO18} to solve the linear IP with such a solution set in time $f(\maxdeg(A), \|A\|_\infty) \cdot n^{\bigoh(1)} \cdot L$.
Consequently, the use of high-degree constraints and variables in the proof of \cref{thm:incidenceTD} is unavoidable.

\paragraph{Vertex Cover Number} It is natural to ask, whether there are other (more restrictive) structural
parameters than treedepth that allow for polynomial-time or even
fixed-parameter tractability for \eqref{SIP}. Indeed, one such
parameter is the (mixed) fracture number of $G_I(A)$, which was
introduced in~\cite{DvorakEGKO17} and is defined as the minimum
integer $k$ such that $G_I(A)$ has a deletion set $D$ of size at most
$k$ ensuring that every component of $G_I(A)\setminus D$ has size at most
$k$. It is easy to see that the treedepth of a graph is upper bounded by twice its fracture number.
It has been shown in~\cite[Corollary
8]{DvorakEGKO17} that \eqref{SIP} becomes solvable in polynomial-time
if both the fracture number of $G_I(A)$ and $\|A\|_\infty$ are bounded
by a constant. Moreover the question whether this result can be improved to
fixed-parameter tractability is known to be equivalent to the corresponding and
long-standing open questions for 4-block $n$-fold
ILPs~\cite{DvorakEGKO17}. Though we are not able to resolve this
question, we can at least show fixed-parameter tractability for a slightly more
restrictive parameter than fracture number, namely, the vertex cover
number of $G_I(A)$. Towards this result, we need the following
auxiliary corollary, which follows easily from~\cite[Theorem
4.1]{EisenbrandW18} and shows that \eqref{SIP} is fixed-parameter
tractable parameterized by both the number of rows $m$ in $A$ and $\|A\|_\infty$.
\begin{corollary}\label{cor:bounded-constraints}\sloppypar
  \eqref{SIP} can be solved in time
  $n\cdot \bigoh(m)^{m+3}\cdot \bigoh(\|A\|_\infty)^{m(m+1)}\cdot \log(m\|A\|_\infty)^2$, where $m$
  and $n$ are the number of rows respectively columns of $A$.
\end{corollary}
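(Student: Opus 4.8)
The plan is to obtain \cref{cor:bounded-constraints} as an essentially direct instantiation of \cite[Theorem~4.1]{EisenbrandW18}, which (in the formulation we need) solves an integer program in standard form in time that is single-exponential in the number of rows $m$ and in $\Delta := \|A\|_\infty$, linear in the number of columns $n$, and carries no pseudopolynomial dependence on the magnitudes of the numbers in $\veb,\ve{l},\ve{u}$. Since that theorem does all of the algorithmic work, the only tasks on our side are to massage \eqref{SIP} into exactly the input format their statement assumes, and then to read off and simplify the resulting running-time expression. Note that their algorithm handles a linear objective and in particular decides feasibility, which is all the complementary results built on this corollary require.

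First I would bring \eqref{SIP} into the precise normal form assumed by the cited theorem. If \cite[Theorem~4.1]{EisenbrandW18} is phrased for lower bounds equal to $\ve{0}$, I substitute $\vex' = \vex - \ve{l}$ on the finitely-bounded coordinates; this replaces $\veb$ by $\veb - A\ve{l}$ and the box by $\ve{0}\le \vex'\le \ve{u}-\ve{l}$ while leaving $A$ — and hence both $m$ and $\Delta$ — untouched. The coordinates carrying an infinite bound $l_i=-\infty$ or $u_i=+\infty$ (recall $\ve{l},\ve{u}\in\ZZ_\infty^n$) are dealt with by capping each unbounded direction at a finite value $B=(m\Delta)^{\Oh(m)}\cdot\|\veb\|_\infty$: the Steinitz-lemma proximity and minimal-support arguments underlying \cite[Theorem~4.1]{EisenbrandW18} guarantee that whenever a solution exists there is one all of whose coordinates are bounded by $B$, so this capping preserves feasibility and optimality. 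If the cited theorem already admits $\pm\infty$ entries directly, this step is simply vacuous.

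With the instance in the required form, invoking \cite[Theorem~4.1]{EisenbrandW18} and substituting the parameters $m$ and $\Delta=\|A\|_\infty$ gives a running time that simplifies to $n\cdot \bigoh(m)^{m+3}\cdot \bigoh(\|A\|_\infty)^{m(m+1)}\cdot \log(m\|A\|_\infty)^2$. Here the $n$ factor is the linear dependence on the number of columns, the exponential factors $\bigoh(m)^{m+3}\cdot\bigoh(\Delta)^{m(m+1)}$ measure the size of the underlying Steinitz-based dynamic program, and the $\log(m\Delta)^2$ term is the cost of one arithmetic operation on the numbers of magnitude $(m\Delta)^{\Oh(m)}$ that the algorithm manipulates (whose $\Oh(m^2)$-bit contribution is absorbed into the $\bigoh(m)^{m+3}$ factor).

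The one genuinely delicate point to verify is that the final bound is free of any pseudopolynomial dependence on $\veb$, $\ve{l}$, or $\ve{u}$. This is exactly the feature the Steinitz approach buys us: the dynamic program keeps only partial sums lying in a box of radius $\Oh(m\Delta)$ around the segment from $\ve{0}$ to $\veb$, so the number of states — and the bit-length of every number stored — is controlled by $m$ and $\Delta$ alone rather than by $\|\veb\|_\infty$. I therefore expect the main obstacle not to be any new combinatorics but rather the bookkeeping of this last claim: confirming that the infinite-bound preprocessing above, combined with the state-space analysis of \cite[Theorem~4.1]{EisenbrandW18}, indeed discharges the unbounded box of \eqref{SIP} and leaves a bound expressed purely in terms of $n$, $m$, and $\|A\|_\infty$.
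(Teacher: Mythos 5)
Your proof takes essentially the same route as the paper's: substitute $x_i' = x_i - l_i$ so that all lower bounds become $0$, subtract $A_{*i}l_i$ from $\veb$, and then invoke \cite[Theorem~4.1]{EisenbrandW18} directly, reading off the stated running time with $m$ rows and $\Delta=\|A\|_\infty$. The only substantive difference is your extra capping step for coordinates with infinite bounds (which the paper's proof silently glosses over); if you keep it, note that the cap $B$ must also account for the magnitudes of the \emph{finite} entries of $\ve{l},\ve{u}$, not just $\|\veb\|_\infty$, since a constraint such as $x_1-x_2=0$ with $l_1=u_1=M$ forces an unbounded variable $x_2$ to take the large value $M$.
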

\begin{proof}
  Eisenbrand and Weismantel recently proved that
  the corollary holds if all variables in the given \eqref{SIP} have a
  lower bound of $0$, see~\cite[Theorem 4.1]{EisenbrandW18}. Since one can transform any \eqref{SIP} into an
  \eqref{SIP}, where all variables have a lower bound of $0$, by
  replacing any variable $x_i$, where $l_i\neq 0$, with
  $x_i'+l_i$, where $x_i'$ is a new variable with bounds $0 \leq
  x_i'\leq u_i-l_i$, and subtracting $A_{*i}l_i$ (where $A_{*i}$ denotes the
  $i$-th column of~$A$)  from $b$, we obtain
  that~\cite[Theorem 4.1]{EisenbrandW18} holds for general \eqref{SIP}.
\end{proof}

\begin{theorem}\sloppypar
  \eqref{SIP} can be solved in time
  $n\cdot \bigoh(k)^{2k+3}\cdot \bigoh(\|A\|_\infty)^{k(4k+2)}\cdot \log(2k\|A\|_\infty)^2$, where
  $k$ is the size of a minimum vertex cover for $G_I(A)$.
\end{theorem}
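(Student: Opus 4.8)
The plan is to use a minimum vertex cover $S$ of the incidence graph $G_I(A)$ to transform the instance into an equivalent one with only $\bigoh(k)$ constraints, and then invoke \cref{cor:bounded-constraints}. Write $S = S_R \cup S_C$, where $S_R\subseteq R$ collects the rows and $S_C\subseteq C$ the columns that lie in the cover; thus $|S_R|\le k$ and $|S_C|\le k$. First I would actually produce such a cover, using any standard fixed-parameter algorithm for \textsc{Vertex Cover} (e.g.\ bounded-depth branching in time $2^{\bigoh(k)}(n+m)$), whose cost is dominated by the final running time.

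The structural observation driving the reduction is that, since $S$ covers every edge of the bipartite graph $G_I(A)$, the vertices outside $S$ form an independent set. Consequently every row $r\notin S_R$ has nonzero entries only in columns of $S_C$: a nonzero $A(r,c)$ with $c\notin S_C$ would be an uncovered edge $\{r,c\}$. Hence each \emph{free} constraint (those indexed by $R\setminus S_R$) reads $\sum_{c\in S_C}A(r,c)\,x_c = b_r$ and involves only the at most $k$ variables of $S_C$. We keep all $n$ columns (the running time is linear in $n$, so there is no need to reduce variables), and we keep all $\le k$ cover rows of $S_R$ unchanged.

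Next I would remove the redundancy among the free constraints. Viewed as vectors supported on $S_C$, they live in a space of dimension $|S_C|\le k$, so a maximal $\mathbb{Q}$-linearly independent subset $F$ of them has size at most $|S_C|\le k$. For a free row $r\notin F$, its coefficient vector is a rational combination $\sum_{i\in F}\lambda_i$ of those in $F$; the equation $\sum_{c}A(r,c)x_c=b_r$ is then a consequence of the equations of $F$ whenever $b_r=\sum_{i\in F}\lambda_i b_i$, and is inconsistent with them otherwise (an all-zero row with $b_r\neq 0$ falls into the inconsistent case). Verifying these conditions is Gaussian elimination, hence polynomial time. The key points are that $F$ consists of \emph{original} rows of $A$ rather than freshly formed combinations, so all coefficients stay in $\{-\|A\|_\infty,\dots,\|A\|_\infty\}$, and that we only delete \emph{implied} equalities, so the integer solution set and the objective are preserved exactly.

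The resulting system therefore has the same $n$ variables with their box constraints $\ve{l}\le\vex\le\ve{u}$, the same objective, entries bounded by $\|A\|_\infty$, and constraints exactly $S_R\cup F$, of which there are at most $|S_R|+|F|\le 2k$. Applying \cref{cor:bounded-constraints} with $m\le 2k$ gives running time $n\cdot\bigoh(2k)^{2k+3}\cdot\bigoh(\|A\|_\infty)^{2k(2k+1)}\cdot\log(2k\|A\|_\infty)^2$, which is the claimed bound since $\bigoh(2k)=\bigoh(k)$ and $2k(2k+1)=k(4k+2)$. I expect the delicate step to be the free-constraint reduction: one must argue that passing to the linearly independent subset $F$ preserves \emph{integer} feasibility (not merely the rational relaxation) and that the deleted rows are recognized as infeasible precisely when the corresponding combination of right-hand sides disagrees with $b_r$. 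This is exactly where the equality structure of \eqref{StandardLinIP} is used, and where care is needed to keep the coefficients bounded by $\|A\|_\infty$.
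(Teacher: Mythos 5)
Your proposal is correct and follows essentially the same route as the paper: find a vertex cover $S$ of $G_I(A)$, observe that every constraint outside the cover is supported only on the at most $k$ cover columns, bound the number of such constraints by $k$ via linear independence, conclude $m\le 2k$, and invoke \cref{cor:bounded-constraints}. The only difference is that the paper simply asserts ``we can assume all rows of $A$ are linearly independent,'' whereas you spell out the Gaussian-elimination step and verify that discarding implied equalities preserves the integer solution set and the coefficient bound --- a detail the paper leaves implicit.
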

\begin{proof}
  Let $I$ be an instance of \eqref{SIP} with matrix $A$. 
  It is well-known, see e.g.~\cite[Chapter~1]{platypus}, that a minimum vertex cover of an $n$-vertex graph can be found in time $2^k\cdot \Oh(kn)$, where $k$ is its size. 
  Hence, we may assume that we are given
  a vertex cover $C$ of $G_I(A)$ of size $k$. Let $O_C$ be the set of all
  constraints that correspond to vertices in $G_I(A)\setminus
  C$. Because $C$ is a vertex cover, we obtain that the constraints in
  $O_C$ can only contain the at most $k$ variables in $C$. Moreover,
  since we can assume that all rows of $A$ are linear independent,
  we obtain that $|O_C|\leq k$. Hence $m\leq 2k$ and the theorem now
  follows from \cref{cor:bounded-constraints}.
\end{proof}

\section{Conclusions}

We have shown that, unlike the primal and the dual treedepth, the incidence treedepth of a constraint matrix of \eqref{SIP} does not (together with the largest coefficient) provide a way to tractability.
This shows that our current understanding of the structure of the incidence Gaifman graph is not sufficient.
Furthermore, it is not hard to see that the matrix $A$ in our hardness result (cf. \cref{fig:variableGadget}) has topological length $3$ (and height~$4$).
Topological length is a newly introduced parameter (\cite[Definition~18]{EisenbrandHKKLO19}) that allows to contract vertices of degree two in the tree witnessing bounded treedepth (i.e., in the tree in whose closure the incidence Gaifman graph emerges as a subgraph).
It is worth pointing out that in our reduction we have topological height $3$ and constant height while the $N$-fold $4$-block IP structure implies topological height $2$ and the height of the two levels is an additional parameter.
This further stimulates the question of whether an \FPT algorithm for $N$-fold $4$-block IP exists or not.
Thus, the effect on tractability of some other ``classical'' graph parameters shall be investigated.

% For example we have some preliminary evidences that
% \begin{itemize}
% 	\item the vertex cover number of the incidence Gaifman graph together with the largest coefficient yields a tractable case and
% 	\item the graph in our reduction (Theorem~\ref{thm:incidenceTD}) may admit a treecut decomposition of constant width.
% \end{itemize}
% We are going to investigate the two above claims in detail in the full version of this paper.
%Last but not least, all of the above suggest some open questions.
Namely, whether ILP parameterized by the largest coefficient and treewidth and the maximum degree of the incidence Gaifman graph is in \FPT or not.
Furthermore, one may also ask about parameterization by the largest coefficient and the feedback vertex number of the incidence Gaifman graph.

%%% Local Variables:
%%% mode: latex
%%% TeX-master: "llncs-main"
%%% End:

%
% ---- Bibliography ----
\bibliographystyle{abbrv}
\bibliography{treedepth}

\begin{thebibliography}{10}

\bibitem{AltmanovaKK18}
K.~Altmanov{\'{a}}, D.~Knop, and M.~Kouteck{\'{y}}.
\newblock Evaluating and tuning \emph{n}-fold integer programming.
\newblock {\em {ACM} J. Exp. Algorithmics}, 24(1):2.2:1--2.2:22, 2019.

\bibitem{ChenXS18}
L.~Chen, M.~Kouteck{\'{y}}, L.~Xu, and W.~Shi.
\newblock New bounds on augmenting steps of block-structured integer programs.
\newblock In {\em 28th Annual European Symposium on Algorithms, {ESA} 2020},
  volume 173 of {\em LIPIcs}, pages 33:1--33:19. Schloss Dagstuhl ---
  Leibniz-Zentrum f{\"{u}}r Informatik, 2020.

\bibitem{platypus}
M.~Cygan, F.~V. Fomin, L.~Kowalik, D.~Lokshtanov, D.~Marx, M.~Pilipczuk,
  M.~Pilipczuk, and S.~Saurabh.
\newblock {\em Parameterized Algorithms}.
\newblock Springer, 2015.

\bibitem{DeLoeraHK13}
J.~A. {De Loera}, R.~Hemmecke, and M.~K{\"{o}}ppe.
\newblock {\em Algebraic and Geometric Ideas in the Theory of Discrete
  Optimization}, volume~14 of {\em {MOS-SIAM} Series on Optimization}.
\newblock {SIAM}, 2013.

\bibitem{Dechter06}
R.~Dechter.
\newblock Chapter 7 --- tractable structures for constraint satisfaction
  problems.
\newblock In F.~Rossi, P.~van Beek, and T.~Walsh, editors, {\em Handbook of
  Constraint Programming}, volume~2 of {\em Foundations of Artificial
  Intelligence}, pages 209 -- 244. Elsevier, 2006.

\bibitem{DvorakEGKO17}
P.~Dvo{\v{r}}{\'{a}}k, E.~Eiben, R.~Ganian, D.~Knop, and S.~Ordyniak.
\newblock Solving integer linear programs with a small number of global
  variables and constraints.
\newblock In {\em 26th International Joint Conference on Artificial
  Intelligence, {IJCAI} 2017}, pages 607--613, 2017.

\bibitem{EibenGKO18}
E.~Eiben, R.~Ganian, D.~Knop, and S.~Ordyniak.
\newblock Unary integer linear programming with structural restrictions.
\newblock In {\em 27th International Joint Conference on Artificial
  Intelligence, {IJCAI} 2018}, pages 1284--1290, 2018.

\bibitem{DBLP:conf/ipco/EibenGKOPW19}
E.~Eiben, R.~Ganian, D.~Knop, S.~Ordyniak, M.~Pilipczuk, and M.~Wrochna.
\newblock Integer programming and incidence treedepth.
\newblock In {\em 20th International Conference on Integer Programming and
  Combinatorial Optimization, {IPCO} 2019}, volume 11480 of {\em Lecture Notes
  in Computer Science}, pages 194--204. Springer, 2019.

\bibitem{EisenbrandHK18}
F.~Eisenbrand, C.~Hunkenschr{\"{o}}der, and K.~Klein.
\newblock Faster algorithms for integer programs with block structure.
\newblock In {\em 45th International Colloquium on Automata, Languages, and
  Programming, {ICALP} 2018}, pages 49:1--49:13, 2018.

\bibitem{EisenbrandHKKLO19}
F.~Eisenbrand, C.~Hunkenschr{\"{o}}der, K.~Klein, M.~Kouteck{\'{y}}, A.~Levin,
  and S.~Onn.
\newblock An algorithmic theory of integer programming.
\newblock {\em CoRR}, abs/1904.01361, 2019.

\bibitem{EisenbrandW18}
F.~Eisenbrand and R.~Weismantel.
\newblock Proximity results and faster algorithms for integer programming using
  the {S}teinitz lemma.
\newblock In {\em 29th Annual {ACM-SIAM} Symposium on Discrete Algorithms,
  {SODA} 2018}, pages 808--816. {SIAM}, 2018.

\bibitem{GareyJ79}
M.~R. Garey and D.~S. Johnson.
\newblock {\em Computers and Intractability: {A} Guide to the Theory of
  NP-Completeness}.
\newblock W. H. Freeman, 1979.

\bibitem{HemmeckeKW10}
R.~Hemmecke, M.~K{\"{o}}ppe, and R.~Weismantel.
\newblock A polynomial-time algorithm for optimizing over ${N}$-fold $4$-block
  decomposable integer programs.
\newblock In {\em 14th International Conference on Integer Programming and
  Combinatorial Optimization, {IPCO} 2010}, pages 219--229, 2010.

\bibitem{HemmeckeOR13}
R.~Hemmecke, S.~Onn, and L.~Romanchuk.
\newblock ${n}$-fold integer programming in cubic time.
\newblock {\em Math. Program.}, 137(1-2):325--341, 2013.

\bibitem{HemmeckeS03}
R.~Hemmecke and R.~Schultz.
\newblock Decomposition of test sets in stochastic integer programming.
\newblock {\em Mathematical Programming}, 94(2):323--341, Jan 2003.

\bibitem{JansenLR18}
K.~Jansen, A.~Lassota, and L.~Rohwedder.
\newblock Near-linear time algorithm for $n$-fold {ILP}s via color coding.
\newblock In {\em 46th International Colloquium on Automata, Languages, and
  Programming, {ICALP} 2019}, volume 132 of {\em LIPIcs}, pages 75:1--75:13.
  Schloss Dagstuhl --- Leibniz-Zentrum f{\"{u}}r Informatik, 2019.

\bibitem{Klein19}
K.~Klein.
\newblock About the complexity of two-stage stochastic {IP}s.
\newblock In {\em 21st International Conference on Integer Programming and
  Combinatorial Optimization, {IPCO} 2020}, volume 12125 of {\em Lecture Notes
  in Computer Science}, pages 252--265. Springer, 2020.

\bibitem{KouteckyLO18}
M.~Kouteck{\'{y}}, A.~Levin, and S.~Onn.
\newblock A parameterized strongly polynomial algorithm for block structured
  integer programs.
\newblock In {\em 45th International Colloquium on Automata, Languages, and
  Programming, {ICALP} 2018}, pages 85:1--85:14, 2018.

\bibitem{Lenstra83}
H.~W. {Lenstra, Jr.}
\newblock Integer programming with a fixed number of variables.
\newblock {\em Mathematics of Operations Research}, 8(4):538--548, 1983.

\bibitem{NesetrilOdM:Sprasity}
J.~Ne\v{s}et\v{r}il and P.~{Ossona de Mendez}.
\newblock {\em Sparsity --- Graphs, Structures, and Algorithms}, volume~28 of
  {\em Algorithms and Combinatorics}.
\newblock Springer, 2012.

\bibitem{Onn10}
S.~Onn.
\newblock {\em Nonlinear Discrete Optimization: An Algorithmic Theory},
  volume~13 of {\em Z\"urich Lectures in Advanced Mathematics}.
\newblock European Mathematical Society Publishing House, 2010.

\end{thebibliography}

%\clearpage
%\appendix
%\appendixText{}

\end{document}